\documentclass{article}
\usepackage{arxiv}
\usepackage[dvipdfmx]{graphicx}
\usepackage[utf8]{inputenc} 
\usepackage[T1]{fontenc}    
\usepackage{hyperref}       
\usepackage{url}            
\usepackage{booktabs}       
\usepackage{amsfonts}       
\usepackage{microtype}      
\usepackage{graphicx}
\usepackage{enumerate}\usepackage{hyperref}
\usepackage{amsfonts}
\usepackage{bussproofs}
\usepackage{mathptmx}
\usepackage{latexsym}
\usepackage{amsmath}
\usepackage{amsthm}
\theoremstyle{remark}
\newtheorem{remark}{Remark}
\theoremstyle{definition}
\newtheorem{definition}{Definition}
\theoremstyle{theorem}
\newtheorem{theorem}{Theorem}
\theoremstyle{theorem}
\newtheorem{lemma}{Lemma}
\theoremstyle{remark}

\theoremstyle{theorem}
\newtheorem{proposition}{Proposition}

\title{Decidability of Quantum Modal Logic}

\newif\ifuniqueAffiliation
\uniqueAffiliationtrue

\ifuniqueAffiliation 
\author{ \href{https://orcid.org/0000-0001-8878-9824}{\hspace{0mm}Kenji ~Tokuo}\\
	Department of Information Engineering, Oita College\\
	National Institute of Technology\\
	Oita 870-0152\\
	Japan\\
	\texttt{tokuo@oita-ct.ac.jp} \\
}
\else
\usepackage{authblk}

\author[1]{%
	\href{https://orcid.org/0000-0000-0000-0000}{\usebox{\orcid}\hspace{1mm}David S.~Hippocampus\thanks{\texttt{hippo@cs.cranberry-lemon.edu}}}%
}
\author[1,2]{%
	\href{https://orcid.org/0000-0000-0000-0000}{\usebox{\orcid}\hspace{1mm}Elias D.~Striatum\thanks{\texttt{stariate@ee.mount-sheikh.edu}}}%
}
\affil[1]{Department of Computer Science, Cranberry-Lemon University, Pittsburgh, PA 15213}
\affil[2]{Department of Electrical Engineering, Mount-Sheikh University, Santa Narimana, Levand}
\fi
\date{January 30, 2025}

\begin{document}
\maketitle
\footnotetext[2]{This is a pre-copyedited, author-produced version of an article accepted for publication in Logic Journal of the IGPL following peer review. The version of record
Kenji Tokuo, Decidability of quantum modal logic, {\it Logic Journal of the IGPL}, 2025
is available online at: https://doi.org/10.1093/jigpal/jzaf010
}

\begin{abstract}
The decidability of a logical system refers to the existence of an algorithm that can determine whether any given formula in that system is a theorem. In this paper, Harrop's lemma is used to prove the decidability of quantum modal logic.
\end{abstract}
\keywords{quantum logic; modal logic, decidability; finite model property; Harrop's lemma}
\markboth{}
{Decidability of Quantum Modal Logic}
\clearpage

\section{Introduction}
\label{sec:int}

\subsection{Background and Objective}
\label{sub:bac}

\paragraph{QML}
We proposed the quantum modal logic system $\mathbf{QML}$ in the previous paper \cite{Toku2024}. It is an extension of quantum logic known as orthologic, designed to handle general modalities, and is equipped with a Kripke-style semantics and a sequent-based deductive system.
\paragraph{Decidability}
The decidability of a logical system refers to the existence of an algorithm that can determine whether any given formula in that system is a theorem. If a logical system is decidable, it ensures that the search space during the process of inference or proof is finite, thus avoiding unnecessary searches and enabling the efficient derivation of new results.

As is well known, classical propositional logic is decidable. Since there is a finite number of possible truth value assignments for the set of propositional variables that appear in a formula, it is possible to check all cases exhaustively.

Is quantum logic decidable, then? Regarding orthologic, Bruns \cite{Brun1976} provided a decision procedure for ortholattice identities. Since orthologic can be embedded into the classical modal logic system $\mathbf{B}$, the decidability of orthologic also follows from the decidability of $\mathbf{B}$ \cite{Gold1974, Dish1977, Chia2002}. However, the decidability of orthomodular logic remains unknown \cite{Kalm1983}.

As for modal logic, it is known that the minimal normal modal logic $\mathbf{K}$, as well as normal modal logics obtained by adding several axioms from $\mathbf{T}$, $\mathbf{B}$, $\mathbf{4}$, $\mathbf{5}$, and $\mathbf{D}$ to $\mathbf{K}$, are decidable \cite{Chel1980, Harr1958, Gold1987}.

\paragraph{Objective of This Paper}
In this paper, we demonstrate the decidability of $\mathbf{QML}$. Since it is essentially constructed from two modalities---one responsible for the quantum logic part and the other for the modal logic part---and these modalities are intertwined through a forcing relation, its decidability is not trivial.
\subsection{Our Method}
\label{sub:met}
In general, by tracing the sequent calculus from bottom to top, we can conclude that a deductive system is decidable if the complexity of the sequents decreases monotonically. However, since the sequent calculus for $\mathbf{QML}$ includes the cut rule, the complexity of the sequents does not necessarily decrease in an upward direction. Therefore, this approach cannot be applied to $\mathbf{QML}$. Instead, we adopt a method based on Harrop's Lemma \cite{Harr1958}.

\begin{proposition}[Harrop's Lemma]
A logical system that is finitely axiomatizable and has the finite model property is decidable.
\end{proposition}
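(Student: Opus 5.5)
The plan is to run two semi-decision procedures in parallel and observe that exactly one of them halts on any given formula. Fix a logical system $L$ with finitely many axiom schemata and finitely many inference rules, and assume it has the finite model property. We read that property in the standard sense: every non-theorem $\varphi$ of $L$ is refuted by some finite model of $L$, that is, a finite model in which all theorems of $L$ hold but $\varphi$ fails. Given $\varphi$, the algorithm alternates between a positive search (for a proof of $\varphi$) and a negative search (for a finite countermodel), stopping as soon as either succeeds.

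\emph{Positive search.} Since $L$ is finitely axiomatizable, the set of derivations is recursively enumerable. It is decidable whether a given finite object is an instance of one of the finitely many axiom schemata or a correct application of one of the finitely many rules. Hence it is decidable whether a finite sequence of formulas, or a finite proof tree, is a derivation. Enumerating all such candidates therefore enumerates all theorems, and if $\varphi$ is a theorem the search eventually finds a proof of it.

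\emph{Negative search.} We enumerate all finite structures of the relevant type up to isomorphism, by increasing size, with valuations restricted to the finitely many propositional variables occurring in $\varphi$. For each finite structure two checks are needed. First, we check that it is a model of $L$, which is where finite axiomatizability is used a second time: it suffices to verify that each of the finitely many axioms is valid and each rule preserves validity. On a finite frame, validity of a schema amounts to truth under all valuations, and there are only finitely many valuations, namely subsets of a finite domain per variable. So each check is a finite computation. Second, we check that $\varphi$ fails at some point of the structure, which is again a finite computation. If $\varphi$ is not a theorem, the finite model property guarantees that this search terminates. By soundness, no finite model of $L$ refutes a theorem, so the two searches never both succeed.

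The step I expect to be the main obstacle is showing that ``being a finite model of $L$'' is decidable. This requires that the semantic notion be aligned with the axiomatization, namely that the models in question are exactly the structures validating the finitely many axioms and rules. It also requires that validity on a finite structure be checkable by quantifying over finitely many valuations. I would handle this by taking ``model of $L$'' to mean precisely a structure validating the axioms and closed under the rules; this is why finite axiomatizability, and not mere recursive axiomatizability, is assumed in the statement. Termination then follows because for every $\varphi$ exactly one of the two interleaved searches halts, and its outcome decides whether $\varphi$ is a theorem.
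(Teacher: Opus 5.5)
Your proposal is correct and follows the same scheme as the paper's proof of the Decidability theorem, which is Harrop's argument carried out for $\mathbf{QML}$. Both interleave a search through derivations with a search through finite structures up to isomorphism. The finite model property guarantees that the second search halts on non-theorems (in the paper's case together with completeness), and soundness guarantees that the two searches never both succeed. The main difference is in the negative search. You use finite axiomatizability a second time, to recognize which finite structures are models of $L$ by checking the finitely many axioms and rules; as you acknowledge, this turns the hypothesis into a finite-frame property. The paper needs no such step, because it enumerates finite quantum modal structures directly, a decidable class for which $\mathbf{QML}$ is sound and complete, so there finite axiomatizability serves only the proof search. Your restriction of valuations to the variables of $\varphi$ also makes precise the paper's unstated claim that each $S_k$ is finite.
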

By exploiting this lemma, we can reduce the task of establishing the decidability of $\mathbf{QML}$ to demonstrating its finite model property.
\section{Quantum Modal Logic}
\label{sec:qml}
This section provides a brief summary of $\mathbf{QML}$ introduced in the previous work \cite{Toku2024}.
\subsection{Language}
\label{sub:lan}
The language of $\mathbf{QML}$ consists of the following symbols.
\begin{itemize}
\item Atomic formulas: $p$, $q$, $\dots$.\footnote{For a technical reason, we assume that the set of all atomic formulas is countable.}
\item Logical connectives: $\wedge$ (conjunction), $\neg$ (negation)
\item Modal operators: $\Box$ (necessity)
\end{itemize}
Formulas are constructed in the usual way. We use lowercase Greek letters, such as $\alpha$, $\beta$, $\dots$, as meta-symbols for formulas. Let $F$ denote the set of all formulas. We use uppercase Greek letters, including those with subscripts, such as $\Gamma$, $\Delta$, $\dots$, and  $\Gamma_1$, $\Gamma_2$, $\dots$, as symbols for subsets of $F$.
\begin{remark}
As usual, we define $\vee$ (disjunction) as $\alpha \vee \beta \equiv \neg(\neg\alpha \wedge  \neg\beta)$ and $\Diamond$ (possibility) as $\Diamond\alpha \equiv \neg\Box\neg\alpha$. 
\end{remark}
The following meta-logical symbols may be employed in proofs for improved readability:
\begin{itemize}
\item $\Rightarrow$: implies
\item $\forall$: for all
\item $\exists$: there exists
\end{itemize}
\subsection{Semantics}
\label{sub:str}
The semantics of $\mathbf{QML}$ is given as follows.
\begin{definition}[Quantum modal structure]
A {\it quantum modal structure} is defined as the following quadruple $\mathcal{S} = \langle W, R_Q, R_M, \rho \rangle$.
\begin{itemize}
\item $W$: a non-empty set.
\item $R_Q$: a reflexive and symmetric relation on $W$.
\item $R_M$: a binary relation on $W$, forced by $R_Q$.
\item $\rho$:  an assignment of an $R_Q$-closed subset of $W$ to each atomic formula.
\end{itemize}
An element of $W$ is called a {\it world} or a {\it state}.
$R_Q$ is called a {\it non-orthogonality relation} on $W$. This name derives from the non-orthogonality relation obtained from the inner product of a Hilbert space.
$R_M$ is called an {\it accessibility relation} with respect to the modal operators $\Box$ and $\Diamond$. This relation is forced by $R_Q$, meaning the following condition must be satisfied:
\[R_M(i,l) \Rightarrow \forall j \in W (R_Q(i,j) \Rightarrow R_M(j,l)).\]
Since $R_Q$ is symmetric, this condition implies that states related by $R_Q$ can see the same worlds with respect to $R_M$.
In the definition of $\rho$, a subset $X$ of $W$ is said to be {\it $R_Q$-closed} if it satisfies the following condition:
\[i \in X \textrm{ iff } \forall j \in W(R_Q(i,j)  \Rightarrow \exists k \in W \textrm{ s.t. }(R_Q(j,k) \textrm{ and } k \in X)).\]
The reason for considering the value of  $\rho$ as an $R_Q$-closed subset rather than an arbitrary subset of $W$ stems from the original idea in quantum logic that the set of all worlds making an experimental proposition true is not merely a set but a closed subspace of a Hilbert space.
\end{definition}
\begin{definition}[Truth]
\label{df:tru}
Let $\mathcal{S} = \langle W, R_Q, R_M, \rho \rangle$ be a quantum modal structure, $i \in W$,  and $\alpha \in F$. We define $\alpha$ to be true at $i$ in $\mathcal{S}$, denoted by $i \models_\mathcal{S} \alpha$, as follows:
\begin{enumerate}
\renewcommand{\labelenumi}{\alph{enumi})}
\item $i \models_\mathcal{S} p$ iff $i \in \rho(p)$ for atomic formulas $p$
\item $i \models_\mathcal{S} \alpha \wedge \beta$ iff $i \models_\mathcal{S} \alpha$ and $i \models_\mathcal{S} \beta$
\item $i \models_\mathcal{S} \neg\alpha$ iff $\forall j \in W (R_Q(i,j) \Rightarrow j \not\models_\mathcal{S} \alpha)$
\item $i \models_\mathcal{S} \Box\alpha$ iff $\forall l \in W (R_M(i,l) \Rightarrow l  \models_\mathcal{S} \alpha)$
\end{enumerate}
\end{definition}
Let $\mathcal{S} = \langle W, R_Q, R_M, \rho \rangle$ be a quantum modal structure.
We write  $\Gamma \models_\mathcal{S} \alpha$ to mean $\forall i \in W (i \models_\mathcal{S} \Gamma \Rightarrow i \models_\mathcal{S} \alpha)$. Here, $i \models_\mathcal{S} \Gamma$ denotes that $\forall \gamma \in \Gamma (i \models_\mathcal{S} \gamma)$. We write $\Gamma \models \alpha$ to mean $\Gamma \models_\mathcal{S} \alpha$ for any quantum modal structure $\mathcal{S}$. Finally, we write $\Gamma \models \Delta$ to mean $\exists \alpha \in \Delta$ s.t. $\Gamma \models \alpha$.
\subsection{Axiomatization}
\label{sub:axi}
An axiomatic system for $\mathbf{QML}$ is presented below. The part concerning non-modal formulas is based on Nishimura \cite{Nish1980}.
Expressions of the form $\Gamma \vdash \Delta$ in axioms and rules are called {\it sequents}, which represent the claim that at least one formula in $\Delta$ can be derived from a finite number of formulas in $\Gamma$.
In general, we write (possibly empty) sets of formulas on the left and right sides of $\vdash$.
We denote $\alpha, \Gamma$ and $\Gamma, \Delta$ as abbreviations for $\{\alpha\} \cup \Gamma$ and $\Gamma \cup \Delta$, respectively. Additionally, we denote $\neg \Gamma$ as $\{\neg \gamma \mid \gamma \in \Gamma\}$ and $\Box \Gamma$ as $\{\Box \gamma \mid \gamma \in \Gamma\}$.
\begin{definition}[Axioms and rules]
\,
\begin{itemize}
\item Axioms
\begin{prooftree}
\AxiomC{$\alpha \vdash \alpha$  \hspace{1.0mm} \scriptsize{(AX)}}
\end{prooftree}
\begin{prooftree}
\AxiomC{$\Gamma \vdash \Box\alpha, \neg\Box\alpha$  \hspace{1.0mm}   \scriptsize{(MEM)}}
\end{prooftree}
\item Rules
\begin{prooftree}
\AxiomC{$\Gamma \vdash \Delta$}
\RightLabel{\scriptsize{(WKN)}}
\UnaryInfC{$\Pi, \Gamma \vdash \Delta, \Sigma$}
\end{prooftree}
\begin{prooftree}
\AxiomC{$\Gamma_1 \vdash \Delta_1, \alpha$}
\AxiomC{$\alpha, \Gamma_2 \vdash \Delta_2$}
\RightLabel{\scriptsize{(CUT)}}
\BinaryInfC{$\Gamma_1, \Gamma_2 \vdash  \Delta_1, \Delta_2$}
\end{prooftree}
\begin{prooftree}
\AxiomC{$\alpha, \Gamma \vdash \Delta$}
\RightLabel{\scriptsize{(${\rm \wedge {\rm l}_1}$)}}
\UnaryInfC{$\alpha \wedge \beta, \Gamma \vdash \Delta$}
\end{prooftree}
\begin{prooftree}
\AxiomC{$\beta, \Gamma \vdash \Delta$}
\RightLabel{\scriptsize{(${\rm \wedge {\rm l}_2}$)}}
\UnaryInfC{$\alpha \wedge \beta, \Gamma \vdash\Delta$}
\end{prooftree}
\begin{prooftree}
\AxiomC{$\Gamma \vdash \Delta, \alpha$}
\AxiomC{$\Gamma \vdash \Delta, \beta$}
\RightLabel{\scriptsize{(${\rm \wedge r}$)}}
\BinaryInfC{$\Gamma \vdash \Delta, \alpha \wedge \beta$}
\end{prooftree}
\begin{prooftree}
\AxiomC{$\Gamma \vdash \Delta, \alpha$}
\RightLabel{\scriptsize{(${\rm \neg l}$)}}
\UnaryInfC{$\neg \alpha, \Gamma \vdash \Delta$}
\end{prooftree}
\begin{prooftree}
\AxiomC{$\alpha \vdash \Delta$}
\RightLabel{\scriptsize{(${\rm \neg r}$)}}
\UnaryInfC{$\neg \Delta \vdash \neg \alpha$}
\end{prooftree}
\begin{prooftree}
\AxiomC{$\alpha, \Gamma \vdash \Delta$}
\RightLabel{\scriptsize{(${\rm \neg \neg l}$)}}
\UnaryInfC{$\neg \neg \alpha, \Gamma \vdash \Delta$}
\end{prooftree}
\begin{prooftree}
\AxiomC{$\Gamma \vdash \Delta, \alpha$}
\RightLabel{\scriptsize{(${\rm \neg \neg r}$)}}
\UnaryInfC{$\Gamma \vdash \Delta, \neg \neg \alpha$}
\end{prooftree}
\begin{prooftree}
\AxiomC{$\Gamma \vdash \alpha$}
\RightLabel{\scriptsize{({\bf K})}}
\UnaryInfC{$\Box\Gamma \vdash \Box\alpha$}
\end{prooftree}
\end{itemize}
\end{definition}
A {\it derivation} is a finite sequence of sequents, where each sequent in the sequence is either an axiom or the lower sequent of a rule, with all upper sequents having already appeared in the sequence.
We say that $\Gamma \vdash \Delta$ is {\it derivable}, or a {\it theorem}, if there exists a derivation where this sequent appears as the last element.
We will simply use $\Gamma \vdash \Delta$ to mean that $\Gamma \vdash \Delta$ is derivable.
\section{Decidability}
\label{sec:dec}
This section demonstrates the decidability of $\mathbf{QML}$.

\begingroup
\def\thetheorem{\ref{prop:fin}}
\begin{proposition}[Finite model property]\label{prop:fin}
If a formula 
$\alpha$ is not satisfied in a quantum modal structure 
$\mathcal{S}$, i.e., 
$\not\models_\mathcal{S} \alpha$, then there exists a quantum modal structure 
$\mathcal{S}^*$  with a finite set of worlds such that 
$\not\models_{\mathcal{S}^*} \alpha$.
\end{proposition}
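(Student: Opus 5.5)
The plan is a filtration argument through the finite set of relevant formulas. Let $\alpha$ fail at some world $i_0$ of $\mathcal{S}=\langle W,R_Q,R_M,\rho\rangle$. Let $\Sigma$ be the set of subformulas of $\alpha$, enlarged by $\neg p$ and $\neg\neg p$ for every atom $p$ occurring in $\alpha$. $\Sigma$ is finite and closed under subformulas. Put $i\sim j$ iff $i$ and $j$ satisfy the same members of $\Sigma$, and let $W^*=W/\!\sim$. Then $|W^*|\le 2^{|\Sigma|}$, so $W^*$ is finite.

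\textbf{Step 1: define $R_Q^*$ so that it is a legitimate non-orthogonality relation and also respects boxes.} Set $R_Q^*([i],[j])$ iff two conditions hold:
\begin{itemize}
\item[(a)] for every $\neg\beta\in\Sigma$, $i\models\neg\beta\Rightarrow j\not\models\beta$, and $j\models\neg\beta\Rightarrow i\not\models\beta$;
\item[(b)] $i$ and $j$ agree on every $\Box\beta\in\Sigma$.
\end{itemize}
This is well defined on classes. It is symmetric by construction. It is reflexive because $R_Q$ is reflexive, so $i\models\neg\beta$ implies $i\not\models\beta$. The key observation is that $R_Q(i,j)$ implies $R_Q^*([i],[j])$. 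Condition (a) is immediate from clause c) of Definition~\ref{df:tru}. Condition (b) holds because, by the forcing condition together with the symmetry of $R_Q$, $R_Q$-related worlds have exactly the same $R_M$-successors, and hence agree on all boxed formulas.

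\textbf{Step 2: define $R_M^*$ and $\rho^*$.} Set $R_M^*([i],[l])$ iff for every $\Box\beta\in\Sigma$, $i\models\Box\beta\Rightarrow l\models\beta$. Then $R_M(i,l)\Rightarrow R_M^*([i],[l])$. Moreover, whether $R_M^*([i],[l])$ holds depends only on the $\Box$-type of $i$. Since $R_Q^*$-related classes share that $\Box$-type by (b), the forcing condition holds in $\mathcal{S}^*$. Finally, put $\rho^*(p)=\{[i]\mid i\models p\}$ for atoms $p$ in $\alpha$, and $\rho^*(q)=\emptyset$ for all other atoms. The set $\emptyset$ is $R_Q^*$-closed: by reflexivity, take $j=i$, and the right-hand side of the closure condition then demands a $k\in\emptyset$.

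\textbf{Step 3: prove the truth lemma, namely $i\models_\mathcal{S}\beta$ iff $[i]\models_{\mathcal{S}^*}\beta$ for all $\beta\in\Sigma$.} The proof is by induction on $\beta$, evaluating truth in $\mathcal{S}^*$ via Definition~\ref{df:tru} before knowing that $\rho^*$ is closed.
\begin{itemize}
\item The atomic and $\wedge$ cases are trivial.
\item For $\neg\beta$, first suppose $i\models\neg\beta$. Then every $R_Q^*$-neighbour $[j]$ of $[i]$ satisfies $j\not\models\beta$ by (a), and the induction hypothesis gives $[j]\not\models^*\beta$. Conversely, if $i\not\models\neg\beta$, pick $j$ with $R_Q(i,j)$ and $j\models\beta$. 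Step 1 gives $R_Q^*([i],[j])$, so $[i]\not\models^*\neg\beta$.
\item For $\Box\beta$, first suppose $i\models\Box\beta$. Then every $R_M^*$-successor $[l]$ of $[i]$ satisfies $l\models\beta$ by the definition of $R_M^*$. Conversely, a witness $l$ with $R_M(i,l)$ and $l\not\models\beta$ is also an $R_M^*$-successor of $[i]$.
\end{itemize}

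\textbf{Step 4: check that $\rho^*(p)$ is $R_Q^*$-closed, which I expect to be the main obstacle.} This is why $\neg\neg p$ was placed in $\Sigma$. Unfolding clause c) twice shows that $\{x\in W^*\mid x\models^*\neg\neg p\}$ is exactly the set of $x$ with $\forall y\,(R_Q^*(x,y)\Rightarrow\exists z\,(R_Q^*(y,z)\wedge z\in\rho^*(p)))$. In other words, it is the closure of $\rho^*(p)$ in the sense of the definition of $R_Q$-closed. By the truth lemma this set equals $\{[i]\mid i\models\neg\neg p\}$. Since $\rho(p)$ is $R_Q$-closed, the same unfolding in $\mathcal{S}$ gives $i\models\neg\neg p$ iff $i\models p$, so the set equals $\rho^*(p)$. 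Hence $\rho^*(p)$ is $R_Q^*$-closed.

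Therefore $\mathcal{S}^*$ is a finite quantum modal structure, and by the truth lemma $[i_0]\not\models_{\mathcal{S}^*}\alpha$.
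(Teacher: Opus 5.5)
Your argument is correct, but it differs from the paper's in two places. First, the paper uses the minimal filtration: $R_Q^*([i],[j])$ holds iff some $i'\in[i]$ and $j'\in[j]$ satisfy $R_Q(i',j')$. Reflexivity, symmetry and $R_Q(i,j)\Rightarrow R_Q^*([i],[j])$ are then immediate. Agreement on boxed formulas comes for free, because genuine $R_Q$-neighbours share their $R_M$-successors, so the paper checks the forcing condition for $R_M^*$ (defined exactly as yours) by passing to such representatives. You instead define $R_Q^*$ from $\Sigma$-types, in the style of a largest filtration. This forces you to impose box-agreement explicitly as clause (b), since (a) alone could link classes of different box-type and break forcing. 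In exchange, well-definedness and forcing become one-line checks.

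Second, for $R_Q^*$-closedness of $\rho^*(p)$, the paper puts only $\neg p$ into $\Sigma$ and chases witnesses directly. From $R_Q(j',k')$ with $k'\models p$ it gets $j'\not\models\neg p$, transfers this to $j$ because $\neg p\in\Sigma$, and then uses closedness of $\rho(p)$ to get $i\in\rho(p)$. You observe instead that closedness of a valuation is exactly the pointwise equivalence of $p$ and $\neg\neg p$. Adding $\neg\neg p$ to $\Sigma$ then makes closedness of $\rho^*(p)$ a corollary of the truth lemma. This is more conceptual and works for any filtration of $R_Q$ that satisfies the truth lemma. The price is a slightly larger $\Sigma$, and proving the truth lemma for the quadruple before knowing it is a structure, which, as you note, is harmless.
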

\addtocounter{proposition}{-1}
\endgroup
We will postpone the proof for now; assuming this proposition, we can conclude that the following main theorem holds.
\begin{theorem}[Decidability]
$\mathbf{QML}$ is decidable.
\end{theorem}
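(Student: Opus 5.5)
The plan is to apply Harrop's Lemma directly, taking the Finite model property (Proposition~\ref{prop:fin}) as given. That leaves two things to check: $\mathbf{QML}$ is finitely axiomatizable, and the finite model property as stated really captures the relevant notion of non-theoremhood. Once both hold, we get a decision procedure that runs two semi-decision processes in parallel.

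For finite axiomatizability, note that the system consists of two axiom schemata, (AX) and (MEM), together with finitely many rule schemata. Derivations are finite objects, so the set of derivable sequents $\vdash \alpha$ is recursively enumerable by listing all derivations. This half of the procedure lets us confirm theoremhood whenever $\vdash \alpha$ holds. The only subtlety is that the sequents $\Gamma\vdash\Delta$ are stated for sets of formulas, possibly infinite. A derivation of $\vdash\alpha$, however, only ever involves finite sets, since sequents claim derivability from finitely many formulas. So we restrict the enumeration to finite sequents.

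For the other half, we use soundness and completeness of $\mathbf{QML}$ with respect to quantum modal structures, established in \cite{Toku2024}: $\vdash\alpha$ iff $\models\alpha$. Hence if $\alpha$ is not a theorem, there is a structure $\mathcal{S}$ with $\not\models_{\mathcal{S}}\alpha$. By Proposition~\ref{prop:fin}, there is then a finite structure $\mathcal{S}^*$ with $\not\models_{\mathcal{S}^*}\alpha$.

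Finite quantum modal structures can be effectively enumerated, which we see as follows:
\begin{itemize}
\item For each $n$, list all pairs of relations $R_Q,R_M$ on $\{1,\dots,n\}$, and keep those where $R_Q$ is reflexive and symmetric and $R_M$ is forced by $R_Q$. Both checks are finite.
\item Assign $R_Q$-closed subsets only to the finitely many atoms occurring in $\alpha$. Closedness is again a finite check, and the values on other atoms do not affect the truth of $\alpha$; we can set them to $W$, which is trivially $R_Q$-closed.
\item Decide $\not\models_{\mathcal{S}^*}\alpha$ by evaluating clauses a)--d) of Definition~\ref{df:tru} at every world, by recursion on the subformulas of $\alpha$.
\end{itemize}
The non-theorems are therefore also recursively enumerable. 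Running both enumerations in parallel, exactly one of them terminates on any given $\alpha$, which decides theoremhood.

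The main obstacle is not in this theorem but in Proposition~\ref{prop:fin} itself, which is assumed here and would require a filtration argument respecting both $R_Q$-closedness and the forcing condition. Within the present argument, the step needing most care is the appeal to soundness and completeness. It is what links syntactic non-derivability to semantic falsifiability, and the statement of Proposition~\ref{prop:fin} only speaks about models. I would cite the completeness theorem from \cite{Toku2024} explicitly at that point.
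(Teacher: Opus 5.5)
Your proposal is correct and follows the paper's proof: a proof search and a search for finite falsifying structures run in parallel, and the completeness of $\mathbf{QML}$ together with the finite model property guarantees that the second search halts on every non-theorem. Two of your steps also fill gaps the paper leaves implicit. Your appeal to soundness is what ensures exactly one search halts. Restricting $\rho$ to the atoms of $\alpha$, with the other atoms sent to the $R_Q$-closed set $W$, is what makes the paper's claim of ``only finitely many structures with $|W^*|=k$ up to isomorphism'' literally true, since there are countably many atoms.
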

\begin{proof}
Let $\alpha \in F$.
The following two procedures are carried out in parallel.
In the first procedure, we search for a proof diagram for $\alpha$.
Since $F$ is countable, the elements of $F$ can be numbered and arranged in order.
For a positive natural number $k$, we consider sequents constructed from the formulas up to the $k$-th element, where the number of occurrences of each formula does not exceed $k$.
The set of proof diagrams of $\mathbf{QML}$ consisting of such sequents, where the number of occurrences of the sequents is less than or equal to $k$, is denoted by $P_k$.
Since $\mathbf{QML}$ is axiomatizable, $P_k$ is a finite set.
Starting from $P_1$, we check whether a proof diagram for $\vdash \alpha$ is contained.
If $\vdash \alpha$ is provable, the proof diagram for $\alpha$ must be contained in one of the $P_k$, so this procedure will eventually terminate.
In the second procedure, we search for a quantum modal structure that falsifies $\alpha$.
By Proposition \ref{prop:fin}, we specifically need to search for a structure where the set of worlds is finite.
For a positive natural number $k$, only finitely many structures with $|W^*| = k$ exist up to isomorphism.
In light of this, we define $S_k$ as the set of representatives of isomorphism classes.
Starting from $S_1$, we check whether a structure  $\mathcal{S}^*$ exists such that $\exists i \in W^*$ s.t. $i \not\models_{\mathcal{S}^*} \alpha$.
If $\vdash \alpha$ is not provable, then by the completeness of $\mathbf{QML}$ \cite{Toku2024}, a structure falsifying $\alpha$ must be contained in one of the $S_k$, so this procedure will eventually terminate.
\end{proof}
In the following, we will prepare for the proof of Proposition \ref{prop:fin}.
Let $\Sigma$ be a set of formulas satisfying: (i)  If $\alpha \in \Sigma $, then so are subformulas of $\alpha$. (ii) If $p \in \Sigma$, then so is $\neg p$ for atomic $p$. According to Goldblatt \cite{Gold1974}, we call such a set $\Sigma$ {\it admissible}.
Given a quantum modal structure 
$\mathcal{S} = \langle W,R_Q,R_M,\rho \rangle$, we define an equivalence relation 
$\sim$ on $W$ as follows:
$i \sim j$ iff $\forall \alpha \in \Sigma (i \models_{\mathcal{S}} \alpha$ iff $j \models_{\mathcal{S}} \alpha)$.
The equivalence class of 
$i$ with respect to $\sim$ is denoted by $[i]$.
\begin{definition}[Collapse]
For a quantum modal structure $\mathcal{S} = \langle W,R_Q,R_M,\rho \rangle$, the {\it collapse} of $\mathcal{S}$ with respect to $\Sigma$, denoted by $\mathcal{S}^* = \langle W^*,R_Q^*,R_M^*,\rho^* \rangle$, is defined as follows.
\begin{itemize}
\item $W^* = W/\sim$
\item $R_Q^*([i],[j])$ iff $\exists i' \in [i]$ s.t. $(\exists j' \in [j]$ s.t. $R_Q(i',j'))$
\item $R_M^*([i],[l])$ iff $\forall \Box \alpha \in \Sigma (i \models_{\mathcal{S}} \Box\alpha \Rightarrow l \models_{\mathcal{S}} \alpha)$
\item $\rho^*(p) \equiv \{ [i] \in W^* \mid i \in \rho(p) \}$ if $p \in \Sigma$, and $\rho^*(p) \equiv \emptyset$ otherwise.
\end{itemize}
\end{definition}
\begin{lemma}\label{le:qms}
$\mathcal{S}^*$  is a quantum modal structure.
\end{lemma}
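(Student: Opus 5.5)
We verify the four clauses of the definition of a quantum modal structure for $\mathcal{S}^*$ one at a time. First, $\rho^*$ and $R_M^*$ are well defined, i.e.\ independent of the chosen representatives. For $p\in\Sigma$ and for $\Box\alpha\in\Sigma$, $\alpha\in\Sigma$, membership of $p$, $\Box\alpha$ and $\alpha$ in $\Sigma$ means that $\sim$-equivalent worlds agree on them. Second, $W^*$ is non-empty because $W$ is. Third, $R_Q^*$ is reflexive and symmetric: reflexivity follows by taking $i'=j'=i$ and using reflexivity of $R_Q$, and symmetry follows from the symmetry of $R_Q$ by swapping the witnesses.

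\textbf{Forcing.} Suppose $R_M^*([i],[l])$ and $R_Q^*([i],[j])$, with witnesses $i'\sim i$ and $j'\sim j$ such that $R_Q(i',j')$. We must show $R_M^*([j],[l])$. So let $\Box\alpha\in\Sigma$ with $j\models_{\mathcal{S}}\Box\alpha$; then $j'\models_{\mathcal{S}}\Box\alpha$. The key observation is that in $\mathcal{S}$ we have $i'\models_{\mathcal{S}}\Box\alpha$. Indeed, if $R_M(i',l')$, then the forcing condition together with $R_Q(i',j')$ gives $R_M(j',l')$, hence $l'\models_{\mathcal{S}}\alpha$. Therefore $i\models_{\mathcal{S}}\Box\alpha$, and the hypothesis $R_M^*([i],[l])$ yields $l\models_{\mathcal{S}}\alpha$, as required.

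\textbf{$R_Q^*$-closedness of $\rho^*(p)$.} This is the step I expect to require the most care, and it is where the admissibility clause (ii), $\neg p\in\Sigma$, is used.

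If $p\notin\Sigma$, then $\rho^*(p)=\emptyset$. This set is closed because, for every $[i]$, the right-hand side of the closedness condition fails when we choose $[j]=[i]$, using reflexivity of $R_Q^*$.

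Now let $p\in\Sigma$. For the direction from left to right, suppose $[i]\in\rho^*(p)$ and $R_Q^*([i],[j])$. Then $k=[i]$ is a witness, by symmetry of $R_Q^*$.

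For the converse, suppose $[i]\notin\rho^*(p)$, i.e.\ $i\notin\rho(p)$. Since $\rho(p)$ is $R_Q$-closed in $\mathcal{S}$, there is a $j$ with $R_Q(i,j)$ such that no $k$ with $R_Q(j,k)$ lies in $\rho(p)$; that is, $j\models_{\mathcal{S}}\neg p$. Clearly $R_Q^*([i],[j])$ holds. Let $R_Q^*([j],[k])$ be witnessed by $j'\sim j$ and $k'\sim k$ with $R_Q(j',k')$. Because $\neg p\in\Sigma$, we get $j'\models_{\mathcal{S}}\neg p$, hence $k'\not\models_{\mathcal{S}} p$. Since $p\in\Sigma$, also $k\not\models_{\mathcal{S}}p$, so $[k]\notin\rho^*(p)$. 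Thus the right-hand side of the closedness condition fails for $[i]$, which proves closedness and completes the verification that $\mathcal{S}^*$ is a quantum modal structure.
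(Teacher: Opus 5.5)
Your proof is correct and follows essentially the same route as the paper. For forcing, you transfer $\Box\alpha$ from $j'$ to $i'$ via the forcing condition of $R_M$, where the paper argues the contrapositive $i'\not\models_{\mathcal{S}}\Box\alpha \Rightarrow j'\not\models_{\mathcal{S}}\Box\alpha$; for closedness, you use the $R_Q$-closedness of $\rho(p)$ together with $\neg p\in\Sigma$ exactly as the paper does, only in contrapositive form, and you also fill in what the paper leaves implicit or omits (well-definedness of $R_M^*$ and $\rho^*$, reflexivity and symmetry of $R_Q^*$, the case $p\notin\Sigma$, and the easy direction of closedness).
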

\begin{proof}
Indeed,
\begin{itemize}
\item $W^*$: a non-empty set.
\item $R_Q^*$: a reflexive and symmetric relation on $W^*$.
\item $R_M^*$: a binary relation on $W^{*}$, forced by $R_Q^*$.
We can verify that the forcing condition is satisfied as follows.
Suppose $R_M^*([i],[l])$.
Then, we have
$i \models_{\mathcal{S}} \Box\alpha \Rightarrow l \models_{\mathcal{S}} \alpha$
for all $\Box\alpha \in \Sigma$.
Taking the contrapositive, we have  
$l \not\models_{\mathcal{S}} \alpha \Rightarrow i \not\models_{\mathcal{S}} \Box \alpha$.  
Suppose $l \not\models_{\mathcal{S}} \alpha$.
Then, we have  
$i \not\models_{\mathcal{S}} \Box \alpha$.
Furthermore, suppose $R_Q^*([i],[j])$.
Then, we have  $\exists i' \in [i]$ s.t. $(\exists j' \in [j]$ s.t. $R_Q(i',j'))$.
Let $i'$ and $j'$ denote the elements asserted by the existence claim.
By the forcing condition of $R_M$, we have  
$i' \not\models_{\mathcal{S}} \Box \alpha \Rightarrow j' \not\models_{\mathcal{S}} \Box \alpha$.  
Then, we also have  
$i \not\models_{\mathcal{S}} \Box \alpha \Rightarrow j \not\models_{\mathcal{S}} \Box \alpha$.
Therefore,
$j \not\models_{\mathcal{S}} \Box\alpha$.
Thus,
$l \not\models_{\mathcal{S}} \alpha \Rightarrow j \not\models_{\mathcal{S}} \Box\alpha$.
Taking the contrapositive, we have
$j \models_{\mathcal{S}} \Box\alpha \Rightarrow l \models_{\mathcal{S}} \alpha$.
Since $\Box\alpha$ is arbitrary, we have
$\forall \Box\alpha \in \Sigma (j \models_{\mathcal{S}} \Box\alpha \Rightarrow l \models_{\mathcal{S}} \alpha)$,
which means 
$R_M^*([j],[l])$.
\item $\rho^*$:  an assignment of an $R_Q^*$-closed subset of $W^*$ to each atomic formula.
We can verify that the $R_Q^*$-closedness is satisfied as follows.
We need to show that
$[i] \in \rho^*(p)$ iff $\forall [j] \in W^* (R_Q^*([i],[j]) \Rightarrow \exists [k] \in W^*$ s.t. $(R_Q^*([j],[k])$ and $ [k] \in \rho^*(p)))$.
This statement trivially holds when $p \not\in \Sigma$, i.e.,  $\rho^*(p) = \emptyset$.
Thus, we considier the case where $p \in \Sigma$.
Assume that
$\forall [j] \in W^* (R_Q^*([i],[j]) \Rightarrow \exists [k] \in W^*$ s.t. $(R_Q^*([j],[k])$ and $ [k] \in \rho^*(p)))$. 
Let $j \in W$, and suppose $R_Q(i,j)$.
Then, we have $R_Q^*([i],[j])$.
From the assumption, we have $\exists [k] \in W^*$ s.t. $(R_Q^*([j],[k])$ and $ [k] \in \rho^*(p))$.
Let $[k]$ denote the element asserted by the existence claim. Then, we have $R_Q^*([j],[k])$ and $[k] \in \rho^*(p)$.
Here, 
$R_Q^*([j],[k])$ means
$\exists j' \in [j]$ s.t. $(\exists k' \in [k]$ s.t. $R_Q(j',k'))$.
Let $j'$ and $k'$ denote the elements asserted by the existence claim.
Then, 
$[k] \in \rho^*(p)$ implies
$[k'] \in \rho^*(p)$, i.e., 
$k' \models_{\mathcal{S}} p$.
Since $R_Q(j',k')$, we have
 $j' \not\models_{\mathcal{S}} \neg p$.
From this, we have
$j \not\models_{\mathcal{S}} \neg p$,
which means that
$\exists k \in W$ s.t. $(R_Q(j, k)$ and  $k \in \rho(p))$.
Therefore,
$R_Q(i,j) \Rightarrow \exists k \in W$ s.t. $(R_Q(j,k)$ and $ k \in \rho(p))$.
Since $j \in W$ is arbitrary,
we have $\forall j \in W (R_Q(i,j) \Rightarrow \exists k \in W$ s.t. $(R_Q(j,k)$ and $ k \in \rho(p)))$.
Then, we have
$i \in \rho(p)$
by the $R_Q$-closedness of $\rho$.
Therefore,
$[i] \in \rho^*(p)$.
The converse is omitted, as it can be easily derived.
\end{itemize}
\end{proof}
\begin{lemma}\label{le:fin}
Let $\Sigma$ be a finite admissible set of formulas, and suppose 
$|\Sigma| = n$. Then, $|W^*| \leq 2^n$.
	\end{lemma}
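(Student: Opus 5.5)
We prove the bound by injecting $W^*$ into the power set of $\Sigma$, following the standard filtration count. For each world $i \in W$ we set
\[ \Sigma_i \equiv \{ \alpha \in \Sigma \mid i \models_{\mathcal{S}} \alpha \}. \]
We then define a map $f : W^* \to \mathcal{P}(\Sigma)$ by $f([i]) = \Sigma_i$.

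First, $f$ is well defined. If $i \sim j$, then by the definition of $\sim$ the worlds $i$ and $j$ agree on every $\alpha \in \Sigma$, so $\Sigma_i = \Sigma_j$. Hence the value $f([i])$ does not depend on the representative chosen from the class.

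Second, $f$ is injective. Suppose $f([i]) = f([j])$, that is, $\Sigma_i = \Sigma_j$. Then for every $\alpha \in \Sigma$ we have $i \models_{\mathcal{S}} \alpha$ iff $\alpha \in \Sigma_i$ iff $\alpha \in \Sigma_j$ iff $j \models_{\mathcal{S}} \alpha$. This is exactly $i \sim j$, so $[i] = [j]$.

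Since $f$ is an injection and $|\mathcal{P}(\Sigma)| = 2^{|\Sigma|} = 2^n$, it follows that $|W^*| \leq 2^n$.

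There is no real obstacle here. The only point needing care is to use the definition of $\sim$ directly, rather than any property of the truth clauses. The argument needs neither admissibility of $\Sigma$ nor the structure of $R_Q$ and $R_M$; finiteness of $\Sigma$ is all that is required. One could sharpen the bound using the fact that $\Sigma_i$ must be consistent with the truth conditions, but that is not needed for the statement.
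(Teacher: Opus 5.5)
Your proof is correct and is essentially the paper's argument: the paper observes that $W^*$ identifies exactly those worlds that induce the same truth-value assignment on the $n$ formulas of $\Sigma$, of which there are at most $2^n$. Your injection $[i]\mapsto\Sigma_i$ into $\mathcal{P}(\Sigma)$ simply spells out that count, including the well-definedness and injectivity checks the paper leaves implicit.
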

\begin{proof}
There are $2^n$ ways to assign truth values to $n$ formulas. 
$W^*$ is the set formed by identifying elements of $W$ that assign the same truth values to the formulas in $\Sigma$. Therefore, $|W^*| \leq 2^n$.
\end{proof}
\begin{lemma}\label{le:tru}
$\forall \alpha \in \Sigma (\forall i \in W ([i] \models_{\mathcal{S}^*} \alpha$ iff $i \models_{\mathcal{S}} \alpha))$.
\end{lemma}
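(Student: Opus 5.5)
The proof proceeds by induction on the construction of $\alpha \in \Sigma$. Since $\Sigma$ is admissible, every subformula of a formula in $\Sigma$ is again in $\Sigma$, so the induction hypothesis is available for all immediate subformulas, and at \emph{all} worlds $i \in W$. Two facts will be used repeatedly. First, if $i \sim i'$ and $\beta \in \Sigma$, then $i \models_{\mathcal{S}} \beta$ iff $i' \models_{\mathcal{S}} \beta$; this is just the definition of $\sim$. Second, $\rho^*$ and $R_M^*$ are well defined on equivalence classes. For $R_M^*$ this holds because its defining condition only mentions truth at $i$ of formulas $\Box\beta \in \Sigma$ and truth at $l$ of formulas $\beta \in \Sigma$. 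I would record this briefly at the start.

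The atomic case comes directly from the definition of $\rho^*$: for $p \in \Sigma$ we have $[i] \in \rho^*(p)$ iff $i \in \rho(p)$, using well-definedness. The conjunction case is immediate from clause b) of Definition~\ref{df:tru} and the induction hypothesis.

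For the negation case, let $\neg\beta \in \Sigma$, so $\beta \in \Sigma$.
\begin{itemize}
\item Suppose $i \models_{\mathcal{S}} \neg\beta$, and let $R_Q^*([i],[j])$ with witnesses $i' \in [i]$, $j' \in [j]$ and $R_Q(i',j')$. Since $i' \sim i$ and $\neg\beta \in \Sigma$, we get $i' \models_{\mathcal{S}} \neg\beta$, hence $j' \not\models_{\mathcal{S}} \beta$. Since $\beta \in \Sigma$ and $j' \sim j$, this gives $j \not\models_{\mathcal{S}} \beta$, and by the induction hypothesis $[j] \not\models_{\mathcal{S}^*} \beta$. Therefore $[i] \models_{\mathcal{S}^*} \neg\beta$.
\item Conversely, suppose $[i] \models_{\mathcal{S}^*} \neg\beta$ and $R_Q(i,j)$. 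Then $R_Q^*([i],[j])$ holds trivially, so $[j] \not\models_{\mathcal{S}^*} \beta$, and by the induction hypothesis $j \not\models_{\mathcal{S}} \beta$. Therefore $i \models_{\mathcal{S}} \neg\beta$.
\end{itemize}

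For the box case, let $\Box\beta \in \Sigma$.
\begin{itemize}
\item If $i \models_{\mathcal{S}} \Box\beta$ and $R_M^*([i],[l])$, then the definition of $R_M^*$, applied to this particular $\Box\beta$, gives $l \models_{\mathcal{S}} \beta$. The induction hypothesis then yields $[l] \models_{\mathcal{S}^*} \beta$.
\item If $[i] \models_{\mathcal{S}^*} \Box\beta$ and $R_M(i,l)$, then for every $\Box\gamma \in \Sigma$, $i \models_{\mathcal{S}} \Box\gamma$ implies $l \models_{\mathcal{S}} \gamma$ by clause d). Hence $R_M^*([i],[l])$, so $[l] \models_{\mathcal{S}^*} \beta$, and by the induction hypothesis $l \models_{\mathcal{S}} \beta$. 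This shows $i \models_{\mathcal{S}} \Box\beta$.
\end{itemize}
This is the standard filtration argument, with $R_M^*$ chosen as the largest filtration.

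I expect the only delicate point to be the negation case. There $R_Q^*$ is defined existentially through arbitrary representatives $i', j'$, so the argument has to transfer truth of $\neg\beta$ from $i$ to $i'$ and falsity of $\beta$ from $j'$ to $j$. This is exactly where closure of $\Sigma$ under subformulas, together with $\neg\beta \in \Sigma$, is needed. The remaining cases are routine.
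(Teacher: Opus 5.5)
Your proof is correct and follows the same route as the paper. Both argue by induction on formulas in $\Sigma$, treat the negation case through the representatives $i', j'$ witnessing $R_Q^*$, and treat the box case directly from the definition of $R_M^*$. You are slightly more careful than the paper in two places: in the step from $[i] \models_{\mathcal{S}^*} \Box\beta$ to $i \models_{\mathcal{S}} \Box\beta$ you verify $R_M^*([i],[l])$ for every $\Box\gamma \in \Sigma$, whereas the paper writes it out only for the given formula, and you state explicitly the well-definedness of $\rho^*$ and $R_M^*$ on equivalence classes, which the paper uses without comment.
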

\begin{proof}
We show this by induction on the structure of formulas.
\begin{enumerate}
\renewcommand{\labelenumi}{\alph{enumi})}
\item Case: Atomic formulas. 
Let $p \in \Sigma$ be atomic and $i \in W$.
We have the following equivalences:
$[i] \models_{\mathcal{S}^*} p$ iff
$[i] \in \rho^*(p)$ iff 
$i \in \rho(p)$ iff $i \models_{\mathcal{S}} p$.
\item Case: $\alpha \wedge \beta$.
In this case, $\alpha, \beta \in \Sigma$. Let $i \in W$.
We have the following equivalences:
$[i] \models_{\mathcal{S}^*} \alpha \wedge \beta$ iff
$[i] \models_{\mathcal{S}^*} \alpha$ and $[i] \models_{\mathcal{S}^*} \beta$
iff
$i \models_{\mathcal{S}} \alpha$ and $i \models_{\mathcal{S}} \beta$
iff
$i \models_{\mathcal{S}} \alpha \wedge \beta$.
Here, the second iff follows from the induction hypothesis.
\item Case: $\neg \alpha$.
In this case, $\alpha \in \Sigma$. Let $i \in W$.
First, we will show that $[i] \models_{\mathcal{S}^*} \neg \alpha \Rightarrow i \models_{\mathcal{S}} \neg \alpha$.
Assume $[i] \models_{\mathcal{S}^*} \neg \alpha$.
Furthermore, 
let $j \in W$, and suppose $R_Q(i,j)$.
Then, we have $R_Q^*([i],[j])$.
Hence $[j] \not\models_{\mathcal{S}^*} \alpha$.
By the induction hypothesis, we have
$j \not\models_\mathcal{S} \alpha$.
Therefore,
$R_Q(i,j) \Rightarrow j \not\models_\mathcal{S} \alpha$.
Since $j$ is arbitrary, we have
$\forall j \in W (R_Q(i,j) \Rightarrow j \not\models_\mathcal{S} \alpha)$
which means 
$i \models_{\mathcal{S}} \neg \alpha$.
Next, we will show that 
$i \models_{\mathcal{S}} \neg \alpha \Rightarrow [i] \models_{\mathcal{S}^*} \neg \alpha$.
Assume 
$i \models_{\mathcal{S}} \neg \alpha$.
Furthermore, let
$[j] \in W^*$, and suppose $R_Q^*([i][j])$.
Then, we have
$\exists i' \in [i] $ s.t. $(\exists j' \in [j]$ s.t. $R_Q(i',j'))$.
Let $i'$ and $j'$ denote the elements asserted by the existence claim. Then, we have
$R_Q(i',j')$ and $i' \models_{\mathcal{S}} \neg \alpha$.
Consequently,
$j' \not\models_{\mathcal{S}} \alpha$,
which implies $j \not\models_{\mathcal{S}} \alpha$.
By the induction hypothesis, we have
$[j] \not\models_{\mathcal{S}^*} \alpha$.
Therefore,
$R_Q^*([i],[j]) \Rightarrow [j] \not\models_{\mathcal{S}^*} \alpha$.
Since $[j]$ is arbitrary, we have
$\forall [j] \in W^* (R_Q^*([i],[j]) \Rightarrow [j] \not\models_{\mathcal{S}^*} \alpha)$,
which means $[i] \models_{\mathcal{S}^*} \neg \alpha$.
\item Case: $\Box \alpha$.
In this case, $\alpha \in \Sigma$. Let $i \in W$. 
First, we will show that $[i] \models_{\mathcal{S}^*} \Box \alpha \Rightarrow i \models_{\mathcal{S}} \Box \alpha$.
Assume $[i] \models_{\mathcal{S}^*} \Box \alpha$.
Furthermore, let
$l \in W$, and suppose $R_M(i,l)$
and
$i \models_{\mathcal{S}} \Box\alpha$.
Then, we have
$l \models_{\mathcal{S}} \alpha$.
Hence
$i \models_{\mathcal{S}} \Box\alpha \Rightarrow l \models_{\mathcal{S}} \alpha$,
which means
$R_M^*([i],[l])$.
Consequently,
$[l] \models_{\mathcal{S}^*} \alpha$.
By the induction hypothesis, we have
$l \models_\mathcal{S} \alpha$.
Therefore,
$R_M(i,l) \Rightarrow l \models_\mathcal{S} \alpha$.
Since $l$ is arbitrary,
we have $\forall l \in W (R_M(i,l) \Rightarrow l \models_\mathcal{S} \alpha)$,
which means
$i \models_{\mathcal{S}} \Box \alpha$.
Next, we will show that
$i \models_{\mathcal{S}} \Box \alpha \Rightarrow [i] \models_{\mathcal{S}^*} \Box \alpha$.
Assume $i \models_{\mathcal{S}} \Box \alpha$.
Furthermore,
let
$[l] \in W^*$, and suppose
$R_M^*([i],[l])$.
Then,
we have
$l \models_{\mathcal{S}} \alpha$.
By the induction hypothesis, we have
$[l] \models_{\mathcal{S}^*} \alpha$.
Therefore,
$R_M^*([i],[l]) \Rightarrow [l] \models_{\mathcal{S}^*} \alpha$.
Since $[l]$ is arbitrary, we have
$\forall [l] \in W^* (R_M^*([i],[l]) \Rightarrow [l] \models_{\mathcal{S}^*} \alpha)$,
which means
$[i] \models_{\mathcal{S}^*} \Box \alpha$.
\end{enumerate}
\end{proof}

We restate the first proposition of this section.
\begin{proposition}
If a formula 
$\alpha$ is not satisfied in a quantum modal structure 
$\mathcal{S}$, i.e., 
$\not\models_\mathcal{S} \alpha$, then there exists a quantum modal structure 
$\mathcal{S}^*$  with a finite set of worlds such that 
$\not\models_{\mathcal{S}^*} \alpha$.
\end{proposition}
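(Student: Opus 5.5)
The plan is the standard filtration argument, using the collapse construction and Lemmas \ref{le:qms}, \ref{le:fin} and \ref{le:tru}.

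Suppose $\not\models_\mathcal{S} \alpha$, so there is some $i \in W$ with $i \not\models_\mathcal{S} \alpha$. I would first fix a finite admissible set $\Sigma$ containing $\alpha$. The natural choice is to take the set $\mathrm{Sub}(\alpha)$ of all subformulas of $\alpha$ and add $\neg p$ for each atomic $p \in \mathrm{Sub}(\alpha)$. The resulting $\Sigma$ is admissible. Condition (i) holds because the subformulas of $\neg p$ are $\neg p$ and $p$, both already in $\Sigma$. Condition (ii) holds by construction. Since $\mathrm{Sub}(\alpha)$ is finite, we have $|\Sigma| \leq 2|\mathrm{Sub}(\alpha)|$, so $\Sigma$ is finite.

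Next I would form the collapse $\mathcal{S}^* = \langle W^*, R_Q^*, R_M^*, \rho^* \rangle$ of $\mathcal{S}$ with respect to this $\Sigma$. The remaining checks are direct applications of the earlier lemmas.
\begin{itemize}
\item By Lemma \ref{le:qms}, $\mathcal{S}^*$ is a quantum modal structure.
\item By Lemma \ref{le:fin}, $|W^*| \leq 2^{|\Sigma|}$, so its set of worlds is finite.
\item Since $\alpha \in \Sigma$, Lemma \ref{le:tru} gives $[i] \models_{\mathcal{S}^*} \alpha$ iff $i \models_\mathcal{S} \alpha$. Hence $[i] \not\models_{\mathcal{S}^*} \alpha$, that is, $\not\models_{\mathcal{S}^*} \alpha$.
\end{itemize}

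The only step needing care is confirming that the chosen $\Sigma$ is genuinely admissible and finite, because Lemmas \ref{le:qms} and \ref{le:tru} rely on admissibility. Closure under subformulas is what lets the induction in Lemma \ref{le:tru} go through, and the presence of $\neg p$ is what the $R_Q^*$-closedness argument in Lemma \ref{le:qms} uses. With these verified, the proposition follows immediately.
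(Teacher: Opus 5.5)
Your proposal is correct and is essentially the paper's proof. Your $\Sigma$ (the subformulas of $\alpha$ together with $\neg p$ for each atom $p$ occurring in $\alpha$) is exactly the smallest admissible set containing $\alpha$, and you then take the collapse and invoke Lemmas \ref{le:qms}, \ref{le:fin} and \ref{le:tru} just as the paper does, only making explicit the finiteness of $\Sigma$ and the falsifying world $[i]$ that the paper leaves implicit.
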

\begin{proof}
Assume that a formula 
$\alpha$ is not satisfied in a quantum modal structure 
$\mathcal{S} = \langle W,R_Q,R_M,\rho \rangle$ i.e., $\not\models_\mathcal{S} \alpha$.
Let $\Sigma$ be the smallest admissible set such that $\alpha \in \Sigma$.
Note that $\Sigma$ is finite.
In this case, the collapse of  $\mathcal{S}$ with respect to 
$\Sigma$, denoted by
$\mathcal{S}^* = \langle W^*,R_Q^*,R_M^*,\rho^* \rangle$, is a quantum modal structure by Lemma \ref{le:qms}.
Moreover, 
$|W^*|$ is a finite set by Lemma \ref{le:fin}.
Furthermore, $\not\models_\mathcal{S}^* \alpha$
holds by Lemma \ref{le:tru}.
\end{proof}
\section*{Acknowledgements}
We are grateful to the anonymous reviewer for thoroughly examining the manuscript. Their comments have been invaluable in enhancing the clarity of our arguments.
\section*{Funding}
This work was supported by Japan Society for the Promotion of Science (JSPS) KAKENHI [Grant Number JP24K03372].
\end{document}